\pgfplotsset{compat=1.6}
\renewcommand*{\eqref}[1]{%
  \hyperref[{#1}]{\textup{\tagform@{\ref*{#1}}}}%
}
\newtheorem{theorem}{Theorem}[section]
\newtheorem{proposition}[theorem]{Proposition}
\newtheorem{remark}[theorem]{Remark}
\newtheorem{example}[theorem]{Example}
\newtheorem{examples}[theorem]{Examples}
\newtheorem{foo}[theorem]{Remarks}
\newcommand{\bbE}{\mathbb E}
\newcommand{\bbF}{\mathbb F}
\newcommand{\bbG}{\mathbb G}
\newcommand{\bbR}{\mathbb R}
\newcommand{\scF}{\mathcal F}
\newcommand\hasan[1]{{\leavevmode\color{black}{#1}}} 
\newcommand\kihun[1]{{\leavevmode\color{black}{#1}}}
\newcommand{\crl}[1]{\ensuremath{ \left\{ #1 \right\} }}
\newcommand{\levy}{{L\'evy~}}
\clearpage \pagenumbering{arabic} %
\titlespacing*{\section}{0pt}{0.5\baselineskip}{0.8\baselineskip}
\titlespacing*{\subsection}{0pt}{0.5\baselineskip}{0.01\baselineskip}
\def\expandafter\normalsize\expandafter{%
    \normalsize
    \setlength\abovedisplayskip{3pt}
    \setlength\belowdisplayskip{3pt}
    \setlength\abovedisplayshortskip{50pt}
    \setlength\belowdisplayshortskip{50pt}
}
\def\@makefntext{\hskip 0em\@makefnmark}
\begin{document}


\title{\textbf{Time-changed \levy processes and option pricing: a critical comment}\thanks{We would like to thank Julien Hugonnier for motivating this note and for fruitful discussions. We also thank Samuel Cohen, Yan Dolinsky, Gregoire Loeper, Loriano Mancini, Stoyan Stoyanov, Farshid Vahid, and seminar participants at QMF~2018 and the 62nd Annual Meeting of the Australian Mathematical Society for comments. The Centre for Quantitative Finance and Investment Strategies has been supported by BNP Paribas.}}
\date{\today}
\author{
	Hasan Fallahgoul\thanks{Hasan A. Fallahgoul, Monash University, School of Mathematics and Centre for Quantitative Finance and Investment Strategies, 9 Rainforest Walk, 3800 Victoria, Australia. E-mail: hasan.fallahgoul@monash.edu.}\\{\small Monash University}
	\\
	\bigskip
	\and Kihun Nam\thanks{Kihun Nam, Monash University, School of Mathematics and Centre for Quantitative Finance and Investment Strategies, 9 Rainforest Walk, 3800 Victoria, Australia. E-mail: kihun.nam@monash.edu.}\\{\small Monash University}
	\\
	\bigskip}
\noindent
\maketitle

\begin{abstract}
	\cite{Carr:2004hl}, henceforth CW, developed a framework that encompasses almost all of the continuous-time models proposed in the option pricing literature. Their framework hinges on the stopping time property of the time changes. By analyzing the measurability of the time changes with respect to the underlying filtration, we show that all models CW proposed for the time changes fail to satisfy this assumption.
	
	%
	
\end{abstract}

\setcounter{page}{1}


\setstretch{2}





\cite{Carr:2004hl}, henceforth CW, developed a time-changed L\'evy (TCL) framework that encompasses almost all of the continuous-time models proposed in the option pricing literature by allowing a correlation between the L\'evy process and stochastic time. The whole paper is based on their closed-form expression of the characteristic function for the TCL process,\footnote{The closed-form expression for the characteristic function in CW is given by the Laplace transform of stochastic time $ T $ evaluated at the characteristic exponent of $ X $ under the \textit{leverage-neutral measure}, which is a complex measure.} which hinges on the stopping time property for the stochastic time.\footnote{For a given filtration $(\scF_s)_{s\geq 0}$, we say a stochastic time $T=(T_t)_{t\geq 0}$ has the stopping time property if $\crl{T_t\leq s}\in\scF_s$  for all $s,t\geq 0$.} Though their closed-form expression for the characteristic function is correct when the stochastic time is a collection of stopping times, none of the specifications proposed by CW for the time changes satisfies this condition, as the stochastic time is given only as a process adapted to the underlying filtration. 

It is not obvious, if not impossible, to construct the stochastic time $T=(T_t)_{t\geq 0}$ such that (i) it is correlated to the L\'evy process; (ii) the time change $T_t$ is a stopping time for each $t$; (iii) the Laplace transforms of $T_t$s with respect to the \textit{leverage-neutral complex measure} can be expressed in closed form; and (iv) it captures empirical regularities such as stochastic volatility and volatility clustering. In fact, it is more natural to assume that the stochastic time is an adapted process. In the CW's framework, the asset log-return is modeled by $X_T$, a L\'evy process $X$ that runs on a continuous stochastic time $T$. By the L\'evy-It\^o decomposition theorem, the continuous part of the L\'evy process $X$ is a Brownian motion with drift. Since $T_t=\frac{t}{\bbE Q_t}Q_t$ where $Q$ is the continuous part of the quadratic variation of the asset log-return $X_T$, we know $T$ can be observed from the market. Therefore, the adaptedness assumption on $T$, which was used in the specifications in CW and literature, is appropriate in the financial context. 

One possibility to apply the CW's framework to their specifications is to change the filtration so that the $T_t$s become stopping times and $X$ remains L\'evy. If the L\'evy process $X$ and the stochastic time $T$ are independent, then the filtration change poses no problem. However, when $X$ and $T$ are dependent, the filtration change that makes $T_t$s stopping times will affect the semimartingale property of $X$, and this will lead us to an arbitrage opportunity. Let us clarify with an example. For a positive constant $\rho$ and two independent Brownian motions $W$ and $B$, consider the case where $X=\rho W+\sqrt{1-\rho^2}B$ and $T_t=\int_0^t\exp(W_s-s^2/2)ds$. The minimal filtration that makes $T_t$s stopping times and $X$ adapted is $\bbG:=(\scF^W_{C_t}\vee \scF^X_t)_{t\geq 0}$, where $C_t$ is the first $t$-level crossing time of $T$ (see Proposition \ref{pre_stopping_adapted0}). This implies that under $\bbG$, one can determine the value of $C_t$ and the path of $W$ from time $0$ to $C_t$ at time $t$. If $C_t>t$ happens at $t$, $X$ from time $t$ to $C_t$ can be decomposed into $\bbG$-martingale $\sqrt{1-\rho^2}B$ and \textit{deterministic} path $\rho W$ with infinite first-order variation. Therefore, $X$ is no longer a semimartingale under $\bbG$ and it has an arbitrage opportunity. In particular, if we set $\rho=1$, then we can set an arbitrage strategy as follows: Since $C_t$ is adapted to $\bbG$, we can determine whether $C_t>t+1$ or not at time $t$. At time $t$, if $C_t>t+1$, then we can find the path of $X$ from $t$ to $C_t$ and the asset price at time $t+1$. If the asset price at $t+1$ is higher than the asset price at $t$, you buy the asset at time $t$, otherwise, you short the asset.

The above argument is based on Proposition \ref{pre_stopping_adapted0} which tells us the following: Under the enlarged filtration that makes $T$ being stopping time and adapted, we are able to foretell, not forecast, the future of the business activity rate $v=\frac{dT}{dt}$ as well as the past of it. Let us provide an example from CW. Let the business activity rate $ v $ be a Cox-Ingersoll-Ross (CIR) process, which is an affine activity rate model (see Section 4.2.1 of CW). The left panel of Figure \ref{simu_cir} provides a simulation of the CIR process with parameters from Table 1 in \cite{fallahgoul2019time}, while the right panel shows the integral of the simulated business activity rate that represents $T$. Since the calendar time $ t $ is replaced by the time change $ T_t $, one needs to impose the condition that the unconditional expectation of $ T_t $ is $ t $. This implies that a realized path of $T$ always oscillates around the line $y=t$, as the right panel of Figure \ref{simu_cir} shows. Since the business activity rate $v$ is adapted, its integrated process $T$ is adapted as well. If $ T_t $ is a stopping time for each $t$, then based on Proposition 1, we know the path of $ T $ until it hits the level of $ t $ at time $t$. Assume $ T_t >t$ at time $ t=1.2 $ as in the right panel of Figure 1. Then, at time $t=1.2$, we know the whole path until $ S $ using the adaptedness of $T$. Since the stopping time property of $T_t$ for all $t$ requires knowledge of the path only until $R$, there is no problem in this respect. Alternatively, assume that $ T_t < t$ at time $ t=0.6$, as in the right panel of Figure 1. Then, the stopping time property of $ T_t $ for each $t$ implies that we know the whole path of $T$ until $Q$ instead of $P$ at time $t=0.6$. This is impossible under the natural filtration that defines the CIR process. Therefore, we cannot use the CIR process for model stopping times in CW's framework.

\begin{center}
	\fbox{Insert Figure~\ref{simu_cir} about here}
\end{center}


In sum, unfortunately, there are several ways to show that none of the specifications proposed in CW for the stochastic time satisfies their assumptions. Perhaps more importantly, the construction of stochastic processes that satisfy their assumptions and are empirically friendly, meaning one can obtain their characteristic functions from CW's results, is challenging and still an open topic for further investigation.

\cite{fallahgoul2019corr} extend CW's result to a case in which the stochastic time $T$ is just an adapted process. The model becomes flexible enough to incorporate virtually all models proposed in the option pricing literature. In this case, all specified option pricing models in \cite{Huang:in} are nested in our setting.

\let\clearpage\relax

\renewcommand{\baselinestretch}{1} 
\normalsize
\bibliographystyle{rfs}
\bibliography{untitled}
\appendix
\setstretch{1.5} 

\begin{figure}[!ht]
	\renewcommand{\thefigure}{\arabic{figure}}
	\begin{center}
		\includegraphics[width=0.8\textwidth]{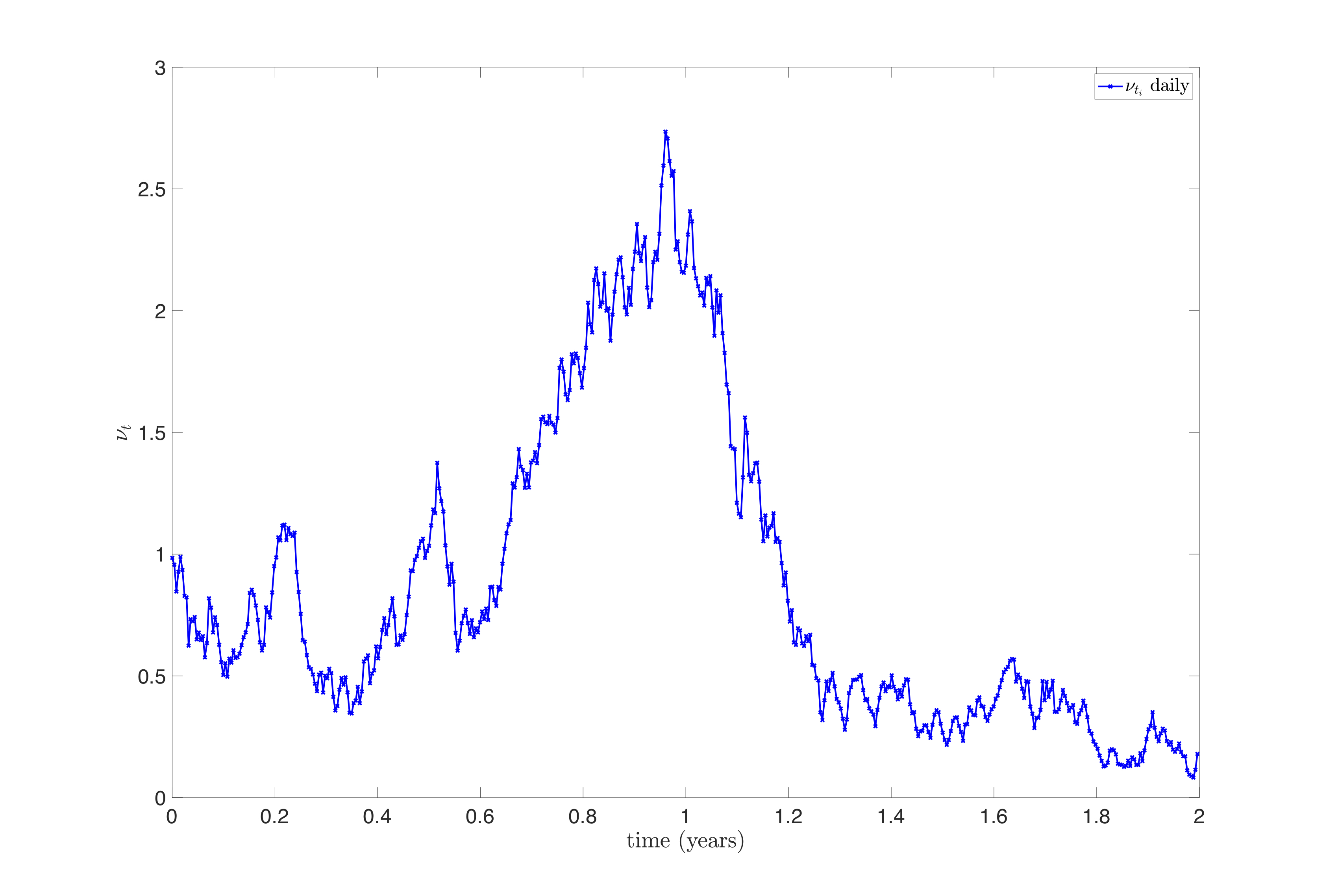}  
		\includegraphics[width=0.8\textwidth]{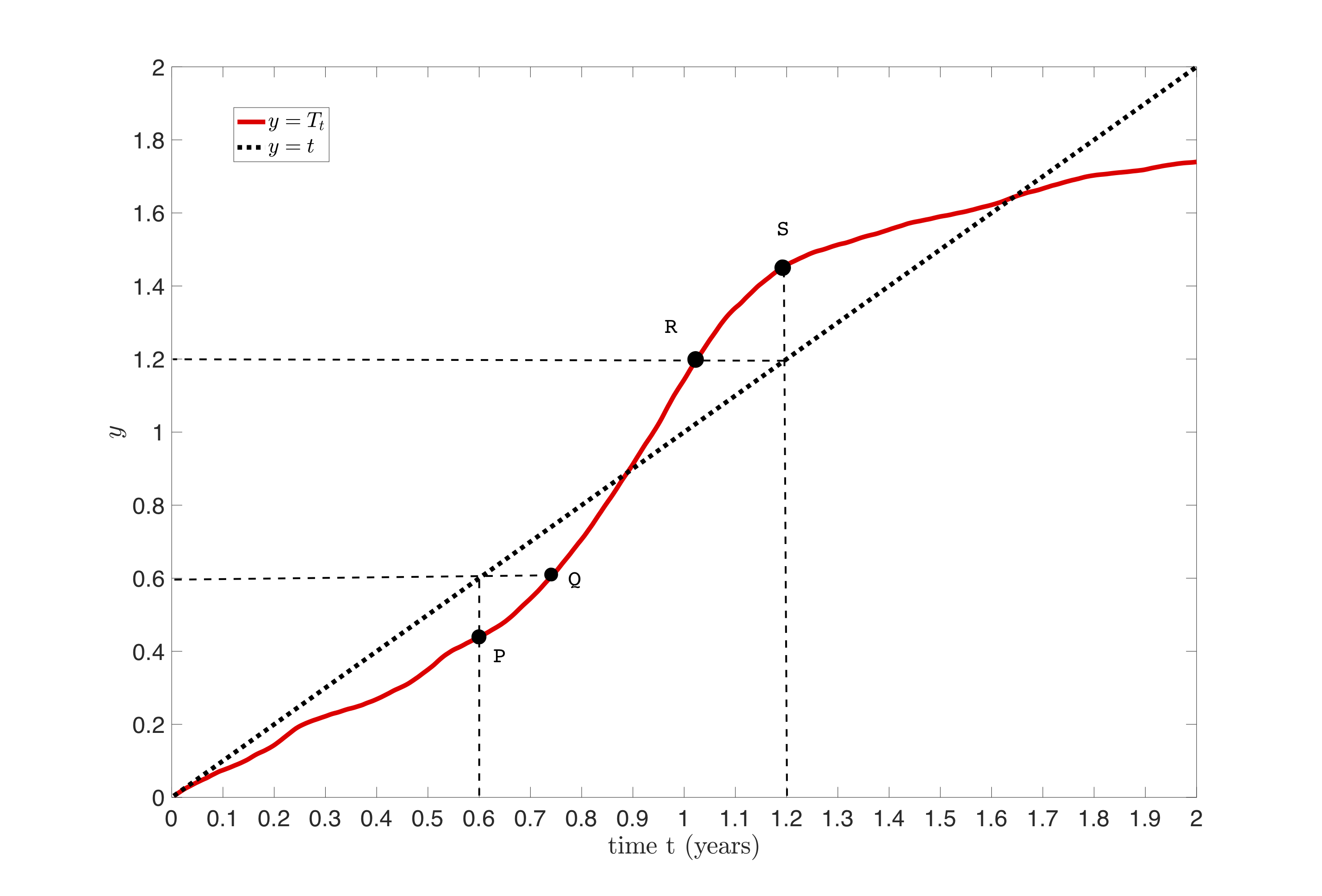}
	\end{center}     
		\caption{Simulated CIR process.}              
	\caption*{(top) Two years simulated Cox-Ingersoll-Ross (CIR) process, i.e., $ dv_t = \kappa(\theta-v_t)\, dt + \sigma\sqrt{v_t}\, dW_t $,  where $ W_t $ is a Brownian motion, parameter $ \kappa $ corresponds to the rate of mean reversion, $ \theta $ is the long-runs mean, and  $ \sigma $ capturs volatility. (down) Integrated CIR process. $ T_t=\int_{0}^{t} v_s ds$.}\label{simu_cir}
\end{figure} 
\newpage
\section{Adaptedness and the stopping time property}
Let $\bbF=(\scF_t)_{t\geq 0}$ be the underlying filtration in CW.\footnote{Throughout CW, the filtration they refer to when they mention martingale, stopping time, and the \levy process is not clear. To make Lemma 1 of CW correct, we need to assume that $T_t$ are stopping times with respect to filtration $\bbF$ such that $X$ is the \levy process, while CW assume that $T$ is adapted to $\bbF$ in Section 4.3.1. For consistency, we assume that CW used the same filtration throughout their paper.} \kihun{In their paper, the stochastic time $T$ is an absolutely continuous process and satisfies two assumptions: $T_t$ is a $\bbF$-stopping time for each $t$, and for $s\leq t$, $T_s$ is $\scF_t$-measurable.}
\hasan{The first assumption \kihun{is used to derive} the characteristic function or generalized Fourier transform of the TCL process (see Lemma 1 of CW), while the \kihun{second assumption is used} to specify the business activity rate $v_t=\frac{dT_t}{dt}$ in Section 4.2 of CW.} 

These two assumptions imply that, at time $t$, we can determine when the stochastic time $T$ hits the $t$-level, as well as the path of $T$ until it hits the $t$-level. 

\begin{proposition}\label{pre_stopping_adapted0}
	Let $(T_t)_{t\geq 0}$ be an increasing continuous process. Assume that $\bbF:=(\scF_t)_{t\geq 0}$ is the filtration such that $T$ is adapted and $T_t$ is a $\bbF$-stopping time for each $t\geq 0$. Let $C_s:=\inf\crl{t> 0: T_t>s}$ be the first $s$-level crossing time of $T$. Then, for all $t\geq 0$, $C_t$ is $\scF^T_t$-measurable. Moreover,
	\begin{align*}
	\scF^T_{C_t}\subset \scF^T_t\subset\scF_t
	\end{align*}
	where $(\scF^T_t)_{t\geq 0}$ is the filtration generated by $T$.
\end{proposition}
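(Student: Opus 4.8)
The plan is to exploit the duality between the increasing continuous process $T$ and its right-continuous inverse $C$, and to isolate the single place where the stopping-time property does the real work. First I would record the inverse relations. Since $T$ is increasing and continuous, for every $u\ge 0$ one has $\{C_t<u\}=\{T_u>t\}$ and, taking a right limit along $u+1/n$, $\{C_t\le u\}=\{T_u\ge t\}$; the usual flat-part and boundary cases ($T_u=t$ exactly) are routine to absorb. These identities already show that $C_t$ is an $\scF^T$-stopping time, because $\{T_u>t\}\in\scF^T_u$, and they hand us the rightmost inclusion $\scF^T_t\subseteq\scF_t$ for free, since $T$ is $\bbF$-adapted so each generator $T_s$ with $s\le t$ lies in $\scF_t$.

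Second, and this is where the hypothesis is spent, I would feed in the stopping-time property. Because each $T_u$ is an $\bbF$-stopping time we have $\{T_u\le t\}\in\scF_t$, hence $\{C_t\le u\}=\{T_u\ge t\}\in\scF_t$ for \emph{every} $u$, including $u>t$. Thus $C_t$ is $\scF_t$-measurable even though its definition refers to $T_u$ at large calendar times. Running the identical argument at every lower level $t'\le t$ shows the whole crossing family $(C_{t'})_{t'\le t}$ is $\scF_t$-measurable; inverting the monotone relation $T_s=\sup\{t'\le t: C_{t'}\le s\}$ reconstructs $T$ on $[0,C_t]$ from this family, so that $\scF^T_{C_t}\subseteq\scF_t$. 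This is already the ``foretelling'' content of the statement: the $T$-path up to the random level-crossing time $C_t$, which can exceed $t$, is measurable with respect to the present $\sigma$-field $\scF_t$.

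Third, to sharpen $\scF_t$ to the smaller $\scF^T_t$ I would use the stopped-process description of the $\sigma$-field at an $\scF^T$-stopping time: for the natural filtration of the continuous process $T$ one has $\scF^T_{C_t}=\sigma(T_{s\wedge C_t}:s\ge 0)$, and since $T_{s\wedge C_t}=T_s\wedge t$ it suffices to prove that each $T_s\wedge t$ is $\scF^T_t$-measurable, equivalently that $\{T_s\le r\}\in\scF^T_t$ for $s>t>r$. Granting this, both conclusions follow at once: the $\scF^T_t$-measurability of $C_t$ drops out of $\{C_t\le u\}=\{T_u\ge t\}$, and $\scF^T_{C_t}\subseteq\scF^T_t$ from the generator description, completing the chain $\scF^T_{C_t}\subseteq\scF^T_t\subseteq\scF_t$.

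The step I expect to be the main obstacle is precisely this descent from $\scF_t$ to $\scF^T_t$ for the future indices $s>t$. The stopping-time property only delivers $\{T_s\le r\}\in\scF_r\subseteq\scF_t$ in the large filtration, and in general $\scF^T_s\cap\scF_t$ need not collapse to $\scF^T_t$, so one must pin down what ``$T_s$ is a stopping time'' means relative to the $T$-generated filtration. The cleanest route is to verify the claim first for $\bbF=\scF^T$, where the stopping-time property \emph{is} the assertion that the crossing events $\{T_s\le r\}$ already lie in $\scF^T_t$, and then to argue that enlarging $\bbF$ while keeping $T$ adapted and every $T_u$ a stopping time cannot manufacture genuinely new information about the $T$-path; intuitively, the joint constraint forces the value of $T_u$ on $\{T_u<u\}$ to be revealed by calendar time $\approx T_u$, which is exactly the degeneracy that makes $\scF^T_s\cap\scF_t=\scF^T_t$ hold on the events in question. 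Establishing this compatibility carefully is the crux; the remainder is bookkeeping with monotonicity and right-continuity.
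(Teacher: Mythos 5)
Your attempt is incomplete, and the gap sits exactly where you flagged it. What you actually prove are the $\scF_t$-statements: $C_t$ is $\scF_t$-measurable (via $\crl{C_t\le u}=\bigcap_n\crl{T_{u+1/n}>t}$ and the stopping-time property in $\bbF$) and $\scF^T_{C_t}\subseteq\scF_t$ (by reconstructing $T_{\cdot\wedge C_t}$ from the crossing family $(C_{t'})_{t'\le t}$). The proposition, however, asserts the stronger $\scF^T_t$-statements, and your proposed bridge --- that enlarging $\bbF$ while preserving adaptedness and the stopping-time property ``cannot manufacture genuinely new information'' --- is false, so that route cannot be completed by any amount of care. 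Concretely, take $\scF_t:=\scF^T_\infty$ for every $t$: then $T$ is adapted and every $T_u$ is trivially an $\bbF$-stopping time, so the hypotheses hold for an arbitrary increasing continuous $T$; yet for $T_t=\int_0^t v_s\,ds$ with $v$ a CIR process, on the event $\crl{T_t<t}$ the crossing time $C_t$ depends nondegenerately on the path of $T$ after time $t$, so $C_t$ is not $\scF^T_t$-measurable and $\scF^T_{C_t}\not\subseteq\scF^T_t$. In other words, under the literal reading of the hypotheses (stopping times with respect to a possibly strictly larger $\bbF$), the $\scF^T_t$-conclusions are not merely hard to reach --- they are not implied at all.

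For comparison, the paper's own proof obtains the $\scF^T_t$-versions only by resolving your crux by fiat: its first line asserts $\crl{C_t\ge s}=\crl{T_s\le t}\in\scF^T_t$, and for $s>t$ this membership is precisely the stopping-time property of $T_s$ relative to the natural filtration $\scF^T$, not relative to $\bbF$; that is, the paper implicitly reads the hypothesis as ``$T_s$ is an $\scF^T$-stopping time.'' Under that reading your ``crux'' is simply the hypothesis, and your argument closes: your Galmarino-type reconstruction of $T_{\cdot\wedge C_t}$ from the family $(C_{t'})_{t'\le t}$ is the same duality the paper packages as $T_s=\inf\crl{u\ge 0:C_u>s}$, hence $\scF^T_s\subseteq\scF^C_{T_s}$, followed by the substitution $s=C_t$ and $T_{C_t}=t$. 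So your diagnosis is accurate --- indeed your second paragraph already proves the inclusion $\scF^T_{C_t}\subseteq\scF_t$, which is the content the main text actually uses for its minimal-filtration and arbitrage arguments --- but as a proof of the proposition as stated the attempt has a genuine gap, and it can only be closed by strengthening (or reinterpreting) the hypothesis so that the stopping-time property holds with respect to $\scF^T$ itself, which is what the paper's proof does silently.
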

\begin{proof}
	Note that $C_t$ is $\scF^T_t$-measurable for every $t\geq 0$ because
	\begin{align*}
	C_t^{-1}([s,\infty))=\crl{C_t\geq s}=\crl{T_s\leq t}\in\scF^T_t
	\end{align*}
	for any $s\in\bbR$ and $\crl{[s,\infty):s\in\bbR}$ generates Borel sigma algebra in $\bbR$. Therefore, $\scF^C_t\subset\scF^T_t$, where $(\scF^C_t)_{t\geq 0}$ is the filtration generated by $C$. We now prove $\scF^T_{C_t}\subset \scF^C_t$. 
	
	Since $T$ is a continuous increasing function,
	$
	T_s=\inf\crl{u\geq 0:C_u>s}.
	$
	Therefore, $T_s$ is the first $s$-level crossing time of $C$. This implies that
	$
	\scF^T_s\subset\scF^C_{T_s}.
	$
	Now, if we let $s=C_t$ and use the relationship $T_{C_t}=t$, we have
	\begin{align*}
	\scF^T_{C_t}\subset \scF^C_t
	\end{align*}
	which completes the proof.
\end{proof}

\end{document}